\newtheorem{assumption}{Assumption}
\newtheorem{problem}{Problem}
\newtheorem{theorem}{Theorem}
\newtheorem{remark}{Remark}
\newtheorem{lemma}{Lemma}
\newtheorem{definition}{Definition}
\newtheorem{fact}{Fact}
\newenvironment{proofof}[1]{\textit{Proof of #1.}}{\hfill \hspace*{0.01pt}  \hfill$\square$}
\newcommand*{\textop}[1]{\mathbin{\text{#1}}}
\def\ifemptyarg#1{%
  \if\relax\detokenize{#1}\relax 
    \expandafter\@firstoftwo
  \else
    \expandafter\@secondoftwo
  \fi}
\newcommand{\real}{\mathbb{R}}
\newcommand{\smat}[1]{\ensuremath{\left[\begin{smallmatrix}#1\end{smallmatrix}\right]}}
\newcommand{\bmat}[1]{\ensuremath{\begin{bmatrix}#1\end{bmatrix}}}
\newcommand{\td}[1][]{\ifemptyarg{#1}{\ensuremath{t^{\textop{d}}}}{\ensuremath{t^{\textop{d}}_{#1}}}}
\newcommand{\Viss}{V}
\newcommand{\barAbf}{\bar{\mathbf{A}}}
\newcommand{\barBbf}{\bar{\mathbf{B}}}
\newcommand{\barQbf}{\bar{\mathbf{Q}}}
\newcommand{\barZbf}{\bar{\boldsymbol{\zeta}}}
\title{\LARGE \bf
Data-driven input-to-state stabilization\\ with respect to measurement errors
}
\author{Hailong Chen, Andrea Bisoffi, Claudio De Persis
\thanks{H. Chen, C. De Persis are with the Engineering and Technology Institute, University of Groningen, 9747 AG, The Netherlands. Email: {\tt\small\{hailong.chen, c.de.persis\}@rug.nl}}%
\thanks{A. Bisoffi is with the Department of Electronics, Information, and Bioengineering, Politecnico di Milano, 20133, Italy. Email: {\tt\small \{andrea.bisoffi\}@polimi.it}}%
\thanks{This publication is part of the project Digital Twin with project number P18-03 of the research programme TTW Perspective which is (partly) financed by the Dutch Research Council (NWO).}
}
\begin{document}

\maketitle

\begin{abstract}
We consider noisy input/state data collected from an experiment on a polynomial input-affine nonlinear system.
Motivated by event-triggered control, we provide data-based conditions for input-to-state stability with respect to measurement errors.
Such conditions, which take into account all dynamics consistent with data, lead to the design of a feedback controller, an ISS Lyapunov function, and comparison functions ensuring ISS with respect to measurement errors.
When solved alternately for two subsets of the decision variables, these conditions become a convex sum-of-squares program.
Feasibility of the program is illustrated with a numerical example.
\end{abstract}

\section{Introduction}

Input-to-state stability (ISS) \cite{sontag1989smooth} is a fundamental property for nonlinear systems in that it guarantees bounded state for any bounded input and that the effect of initial conditions on this bound vanishes over time.
Many stability-related notions for nonlinear systems are encompassed by ISS \cite{sontag2008input}.
In particular, ISS addresses the case of so-called ``open systems'' where one would like that a certain input, intended as a disturbance, has a limited effect on the system state thanks to the ability of the system to ``dissipate'' the energy introduced by such an input.
In this sense, ISS is linked to properties like dissipativity \cite{willems1976mechanisms}, passivity \cite{ortega1997passivity}, and $L_2$-gain \cite{van1992l2}.
Indeed, \cite{sontag1995characterizations} shows that a system is ISS if and only if an ISS Lyapunov function exists and such a Lyapunov function needs to satisfy a dissipativity-like inequality \cite[\S 3.3]{sontag2008input}.

Event-triggered control constitutes an example of the utility of the ISS property since a vast portion of event-triggered schemes (notably, \cite{tabuada2007event}, \cite{mazo2010iss}, \cite{postoyan2014framework}) assume (i)~ISS with respect to the measurement error for the closed loop system, so that the measurement error induced by event-based sampling can be tolerated, and (ii)~the existence of an ISS Lyapunov function with respect to the measurement error, which plays a key role in the construction of the event-triggering strategy. This is further elaborated in Section~\ref{sec:prel:event_trig}.

This work deals with enforcing ISS with respect to the measurement error, motivated by the relevance of this property in the context of event-triggered control \cite{scheres2022robustifying}.
Departing from model-based approaches, we would like to design a feedback controller that enforces ISS with respect to the measurement error based on noisy input\slash state data collected in an open-loop experiment, for the class of nonlinear systems given by input-affine polynomial systems.
In a nutshell, our approach consists of (i)~characterizing, through the collected noisy data, the set of all polynomial dynamics that could have generated such data, along the lines of set-membership identification \cite{milanese2004set}; (ii)~finding data-based conditions under which a controller, an ISS Lyapunov function and suitable comparison functions can be designed for all such dynamics; (iii) carrying out the design in an automated fashion, thanks to tools for sum-of-squares (SOS) optimization \cite{parrilo2003semidefinite,jarvis2005control,chesi2010lmi}. 
Since noisy data prevent exact identification of the actual system, our conditions offer robustness to uncertainty.
Along with robustness guarantees and automated optimization-based design of controller and ISS certificate, appealing features of our result are that it requires only high-level prior knowledge on the structure of the system and on the upper bound of the magnitude of the noise, and is a stepping stone towards event-triggered control.
These aspects are the contribution of this work.
The limitations of the result are that (i)~the obtained conditions are bilinear in the decision variables and, to leverage convexity, we need to solve for them in two alternate steps, see the end of Section~\ref{sec:main} for details; (ii)~we assume to approximately measure the time derivative of the state. 
Still, as for (i), most model-based approaches also suffer from the same bilinearity \cite{jarvis2005control,majumdar2013control} and, if the controller is fixed, our conditions become convex in the ISS Lyapunov function and the comparison functions; as for (ii), viable surrogates to the knowledge of the state derivative are discussed in Section~\ref{sec:probl_form}.

Based on the above discussion, this work is related to recent works on data-based verification\slash design for dissipativity and, tangentially, data-based event-triggered control.
Within the thread of dissipativity, \cite{koch2020verifying} verifies dissipativity and \cite{van2022data} enforces it from noisy input\slash state data, and \cite{romer2019one,rosa2023data} from input\slash output data, but all of these works consider linear systems.
Verification of dissipativity for polynomial systems has been considered in \cite{martin2021dissipativity,martin2022determining} where an analysis problem is solved, instead of control synthesis to enforce ISS of the closed-loop system.
Within data-driven event-triggered control, we mention \cite{de2022event,digge2022data,wang2023data} where linear systems are considered.
Whereas our previous works address data-based control design for asymptotic stability or invariance \cite{guo2020learning,bisoffi2022data,luppi2021data} for polynomial systems, we consider here a different problem and property, namely, enforcing ISS with respect to the measurement error, motivated by its relevance for event-triggered control.

\emph{Structure.} 
We introduce preliminaries and elaborate on the motivating application in Section~\ref{sec:prel}.
Our data-based setting and problem formulation are in Section~\ref{sec:probl_form}.
Section~\ref{sec:main} contains our main result to design a controller enforcing ISS with respect to the measurement error from data.
This design is tested on a numerical example in Section~\ref{sec:example}, where we also show its potential for data-based event-triggered control.

\section{Preliminaries}
\label{sec:prel}

\subsection{Notation}
\label{sec:prel:notation}

In the sequel we use sum-of-squares (SOS) polynomials and SOS matrix polynomials: we refer to, e.g., \cite{parrilo2003semidefinite,jarvis2005control,chesi2010lmi}, for excellent surveys on SOS.
The identity matrix is $I$.
We write $A\succ 0$ if a symmetric matrix $A$ is positive definite and $A \succ B$ if $A-B\succ 0$.
The set of nonnegative integer numbers is $\mathbb{N}$ and $\mathbb{N}_{\geq i}:= \mathbb{N} \cap \{j \colon j \ge i \} $.
The $n$-dimensional Euclidean space is $\mathbb{R}^n$.
The Euclidean norm of a vector $x\in\mathbb{R}^n$ is $|x|$.
A function $\alpha:\mathbb{R}_{\geq0}\rightarrow\mathbb{R}_{\geq0}$ is a class $\mathcal{K}_\infty$ function if it is continuous, strictly increasing, unbounded and satisfies $\alpha(0)=0$.
A function $\beta:\mathbb{R}_{\geq0} \times \mathbb{R}_{\geq0} \rightarrow\mathbb{R}_{\geq0}$ is a class $\mathcal{KL}$ function if, for each $t \ge 0$, $\beta(\cdot,t)$ is a class $\mathcal{K}_\infty$ function and, for each $r \ge 0$, $\beta(r,\cdot)$ is decreasing and $\lim_{t \to +\infty} \beta(r,t) = 0$ \cite[p.~166]{sontag2008input}.
For a matrix $A$, $\| A \|$ is its induced 2-norm.
We abbreviate the symmetric matrix $\left[\begin{smallmatrix}A&B^\top\\B&C\end{smallmatrix}\right]$ to $\left[\begin{smallmatrix}A&\star\\B&C\end{smallmatrix}\right]$.

\subsection{Auxiliary results for the sequel}
\label{sec:prel:aux_res}

Lemma~\ref{lemma:Kinf} will allow us to design polynomial class $\mathcal{K}_\infty$ functions and Lemma~\ref{lemma:uncert_remov} to bound the data-based uncertainty.

\begin{lemma}
\label{lemma:Kinf}
Consider $\alpha \colon \real_{\ge 0} \to \real_{\ge 0}$ defined as $\alpha(r):=\sum_{k=1}^{N}c_{k}r^{2k}$.
The function $\alpha$ is class $\mathcal{K}_{\infty}$ if the scalars $c_1,c_2,\dots,c_{N}$ satisfy $c_1\geq0,c_2\geq0,\dots,c_{N}\geq0$ and $c_1+c_2+\dots+c_N>0$.
\end{lemma}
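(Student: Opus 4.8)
The plan is to verify directly the four defining properties of a class $\mathcal{K}_\infty$ function: $\alpha(0)=0$, continuity, strict monotonicity, and unboundedness. The first two are immediate: $\alpha$ is a polynomial with no constant term, so $\alpha(0)=0$, and $\alpha$ is continuous on $\real_{\ge 0}$ as every polynomial is.

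For unboundedness I would use the hypothesis $c_1+\dots+c_N>0$ together with $c_k\ge 0$ for all $k$ to fix an index $m\in\{1,\dots,N\}$ with $c_m>0$. Dropping the remaining (nonnegative) terms gives, for every $r\ge 0$, the bound $\alpha(r)\ge c_m r^{2m}$; since the right-hand side tends to $+\infty$ as $r\to+\infty$, so does $\alpha$.

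The only part requiring a short argument is strict monotonicity. I would note that on $\real_{\ge 0}$ each map $r\mapsto r^{2k}$ is strictly increasing, so each term $r\mapsto c_k r^{2k}$ is nondecreasing (a nonnegative multiple of a strictly increasing function) and, for the index $m$ with $c_m>0$, strictly increasing. Then for any $0\le r_1<r_2$ we get $\alpha(r_1)\le\alpha(r_2)$ by summing the nondecreasing contributions, while the $m$-th term contributes the strict inequality $c_m r_1^{2m}<c_m r_2^{2m}$; adding these yields $\alpha(r_1)<\alpha(r_2)$. Together with the previous points this shows $\alpha\in\mathcal{K}_\infty$.

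There is essentially no hard step here; the only point to be careful about is not to assert that every individual term $c_k r^{2k}$ is strictly increasing (those with $c_k=0$ are constant), but only that at least one term is strictly increasing while all the others are nondecreasing, which suffices for the sum to be strictly increasing.
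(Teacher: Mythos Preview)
Your proof is correct and follows essentially the same approach as the paper: a direct verification of the four $\mathcal{K}_\infty$ properties, using nonnegativity of each term together with strict positivity of at least one coefficient to obtain both strict monotonicity and unboundedness. The paper phrases the monotonicity step via $\alpha(r_2)-\alpha(r_1)=\sum_{k=1}^N c_k(r_2^{2k}-r_1^{2k})$ being a sum of nonnegative quantities that vanishes only if all $c_k=0$, which is the same idea as your decomposition into nondecreasing summands with at least one strictly increasing.
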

\begin{proof}
The function $\alpha$ is continuous and $\alpha(0)=0$.
It is strictly increasing if $\alpha(r_2)>\alpha(r_1)$ for all $r_1$ and $r_2$ with $r_2>r_1\geq0$.
For two arbitrary $r_2$ and $r_1$ such that $r_2>r_1\geq0$, $\alpha(r_2)-\alpha(r_1)=\sum_{k=1}^{N}c_k(r_2^{2k}-r_1^{2k})$.
This sum of nonnegative quantities is zero only if $c_1=c_2=\dots=c_N=0$. This is excluded by hypothesis and, thus, $\alpha(r_2)>\alpha(r_1)$.
Since at least one of $c_1$, \dots, $c_N$ is nonzero as a consequence of the hypothesis, $\alpha$ grows unbounded.
\end{proof}

\begin{lemma}
\label{lemma:uncert_remov}
Given matrices $C=C^\top\in\mathbb{R}^{p\times p}$, $E\in\mathbb{R}^{p\times m}$, $G\in\mathbb{R}^{n\times p}$ and $\bar{F}=\bar{F}^\top \succeq 0$, the fact that
\begin{equation*}
\begin{aligned}
C+EFG+G^\top F^\top E^\top\preceq0 \quad \forall F \colon F^\top F\preceq\bar{F}
\end{aligned}
\end{equation*}
is implied by the existence of a scalar $\lambda>0$ such that
\begin{equation*}
\begin{aligned}
C+\lambda EE^\top+ \lambda^{-1} \,G^\top \bar{F} G\preceq0.
\end{aligned}
\end{equation*}
\end{lemma}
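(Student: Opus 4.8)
The plan is to recognize Lemma~\ref{lemma:uncert_remov} as a standard ``matrix elimination'' or S-procedure-type result and to prove it by a Young-type inequality together with a suitable bound on the uncertain term $F$. The key observation is that the uncertain quantity $EFG + G^\top F^\top E^\top$ can be bounded, uniformly over all admissible $F$, by a sum of two terms: one that depends only on $E$ and one that depends only on $G$ and the known bound $\bar F$. Combining this bound with the hypothesized inequality then yields the desired conclusion.

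Concretely, first I would establish the elementary inequality
\begin{equation*}
EFG + G^\top F^\top E^\top \preceq \lambda\, EE^\top + \lambda^{-1} G^\top F^\top F G
\end{equation*}
for any scalar $\lambda>0$. This follows by expanding the obviously positive semidefinite expression $(\sqrt{\lambda}\,E - \lambda^{-1/2} G^\top F^\top)(\sqrt{\lambda}\,E - \lambda^{-1/2} G^\top F^\top)^\top \succeq 0$ and rearranging. The second step is to remove the dependence on $F$ in the last term: since $F^\top F \preceq \bar F$ by assumption and $G^\top(\cdot)G$ preserves this ordering (congruence by $G$), we have $G^\top F^\top F G \preceq G^\top \bar F G$, hence $\lambda^{-1} G^\top F^\top F G \preceq \lambda^{-1} G^\top \bar F G$. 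Chaining these two steps gives
\begin{equation*}
C + EFG + G^\top F^\top E^\top \preceq C + \lambda\, EE^\top + \lambda^{-1} G^\top \bar F G,
\end{equation*}
and the right-hand side is $\preceq 0$ precisely by the hypothesized existence of $\lambda>0$. Since $\lambda$ was chosen independently of $F$, the bound holds for all admissible $F$, which is the claim.

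I do not anticipate a genuine obstacle here; the result is essentially a classical lemma (a robust counterpart of the Schur/Young inequality, sometimes attributed to Petersen in the robust control literature). The only point requiring a little care is the direction of the matrix inequality when passing from $F^\top F \preceq \bar F$ to $G^\top F^\top F G \preceq G^\top \bar F G$, i.e.\ noting that pre- and post-multiplication by $G^\top$ and $G$ respectively is order-preserving on the positive semidefinite cone, and that the scalar $\lambda^{-1}>0$ does not flip the inequality. One should also record that $\bar F \succeq 0$ guarantees $G^\top \bar F G \succeq 0$, which is consistent with (but not needed beyond) the argument. The whole proof is three short lines.
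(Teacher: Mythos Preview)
Your argument is correct and coincides with the paper's own proof: both complete the square via $(\sqrt{\lambda}\,E - \lambda^{-1/2} G^\top F^\top)(\sqrt{\lambda}\,E - \lambda^{-1/2} G^\top F^\top)^\top \succeq 0$ to obtain the Young-type bound, then invoke $F^\top F \preceq \bar F$ together with the congruence $G^\top(\cdot)G$ to replace $G^\top F^\top F G$ by $G^\top \bar F G$. There is no substantive difference between the two presentations.
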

\begin{proof}
The claim follows by completing the square as
\begin{align*}
& 0 \succeq C+\lambda EE^\top+ \lambda^{-1} \,G^\top \bar{F} G \succeq 
C +EFG+G^\top F^\top E^\top\\
& \qquad + (\sqrt{\lambda} E - \sqrt{\lambda}^{-1} G^\top F^\top) (\sqrt{\lambda} E^\top - \sqrt{\lambda}^{-1} F G ) 
\end{align*}
by $\lambda > 0$ and $F^\top F\preceq\bar{F}$.
\end{proof}

\begin{figure}
\begin{centering}
\includegraphics[width=0.7\linewidth]{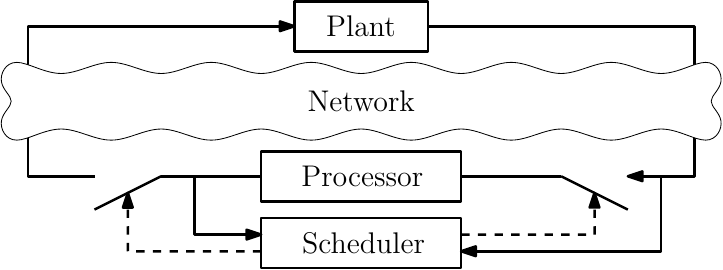}
\par\end{centering}
\caption{Event-triggered control scheme.}\label{fig1}
\end{figure}

\subsection{Motivation for ISS with respect to the measurement error based on event-triggered control}
\label{sec:prel:event_trig}

In this section, we recall the basic ideas of model-based event-triggered control \cite{tabuada2007event} to motivate our study on enforcing ISS with respect to  measurement error, from data.

Event-triggered control considers the scheme in Fig.~\ref{fig1}, where a network connects plant and controller.
The plant is described, for state $x \in \real^n$ and input $u \in \real^m$, by
\begin{equation}
\label{plant_generic}
\dot x = F(x,u).
\end{equation}
The processor computes the control action $u=k(x)$ when prompted by the scheduler.
To save communication and computation resources, the state is sampled and communicated via the network \emph{only} when necessary.
This is decided based on a stability condition involving ISS with respect to the measurement error, as explained below.
Suppose for simplicity that as soon as a new measurement is collected at times
$t_0:= 0, t_1, t_2, \dots,t_i, t_{i+1}\dots$, the controller update is computed and applied.
Between any two consecutive controller updates, the zero-order-hold policy
\begin{equation}\label{contrl}
u(t)=k(x(t_i)),\quad t\in\left[t_i,t_{i+1}\right),i\in\mathbb{N}_{\ge 0}
\end{equation}
is adopted and the measurement error is
\begin{equation*}
e(t):=x(t_i)-x(t),\quad t\in[t_i,t_{i+1}),i\in\mathbb{N}_{\ge 0},
\end{equation*}
so that $e(t_i)=0$ for all $i\in\mathbb{N}_{\ge 0}$.
With the introduction of $e$, the control action in~\eqref{contrl} is reformulated as
\begin{equation*}
u(t)=k\big(x(t)+e(t)\big),\quad t\in\left[t_i,t_{i+1}\right),i\in\mathbb{N}_{\ge 0}
\end{equation*}
and the design of the event-triggered scheme can be carried out by considering the closed-loop system
 \begin{equation}\label{sys4_generic}
\dot{x}= F(x,k(x+e)).
\end{equation}
Indeed, to determine when to sample and to analyze stability properties, one aims at enforcing that \eqref{sys4_generic} is ISS with respect to the measurement error $e$, as per the next definition where $\| e \|_\infty$ is the (essential) supremum norm of $e(\cdot)$.
\begin{definition}[{\cite[p.~171]{sontag2008input}}]
\label{def:ISS}
System~\eqref{sys4_generic} is ISS with respect to measurement error $e$ if there exist a class $\mathcal{KL}$ function $\beta$ and a class $\mathcal{K}_\infty$ function $\gamma$ such that the bound
\begin{align*}
|x(t)| \le \beta(|x_0|, t) + \gamma(\| e \|_\infty )
\end{align*}
holds for all solutions (i.e., for all signals $e(\cdot)$, all initial conditions $x_0$ and all $t \ge 0$).
\end{definition}

A standard way to guarantee ISS is through an ISS Lyapunov function, defined next, since a system is ISS (with respect to some input) if and only if it admits a smooth ISS Lyapunov function \cite{sontag1995characterizations}.
\begin{definition}[{\cite[p.~178]{sontag2008input}}]
\label{def:ISS_Lyapunov_function}
A smooth function $\Viss:\mathbb{R}^n\rightarrow\mathbb{R}_{\geq0}$ is an ISS Lyapunov function for the closed-loop system \eqref{sys4_generic} if there exist class $\mathcal{K}_\infty$ functions $\alpha_1,\alpha_2,\alpha_3$ and $\alpha_4$ satisfying, for all $x$ and $e$,
\begin{subequations}
\label{propr_ISS_Lyap_fun}
\begin{align}
& \alpha_1(| x|)\leq \Viss(x)\leq\alpha_2(| x|)\label{def1-1} \\
& \langle\nabla{\Viss}(x),F(x,k(x+e))\rangle\leq-\alpha_3(| x|)+\alpha_4(| e|).\label{def1-2}
\end{align}
\end{subequations}
\end{definition}
\medskip

Suppose now that sampling and controller updates occur as soon as, for $\sigma \in (0,1)$, the condition
\begin{align}
\label{event_trigg_cond}
\alpha_4(|e|) \le \sigma \alpha_3(| x|)
\end{align}
is violated, where $\alpha_3$ and $\alpha_4$ are from Definition~\ref{def:ISS_Lyapunov_function}.
Then,
\begin{equation}
\label{implied_strict_decrease}
\langle\nabla{\Viss}(x), \, F(x,k(x+e)) \rangle\leq(\sigma-1)\alpha_3(|x|)
\end{equation}
and strict decrease of $\Viss$ for $x \neq 0$ holds. This, along with~\eqref{def1-1}, ensures global asymptotic stability of the origin.

In view of this reasoning, the existence of an ISS Lyapunov function with the properties in Definition~\ref{def:ISS_Lyapunov_function} is the cornerstone of most event-triggered control schemes.
Nonetheless, the existence of such an ISS Lyapunov function is typically assumed as a starting point in the event-triggered literature \cite{tabuada2007event}, \cite{mazo2010iss}, \cite{postoyan2014framework}.
Here, we would like to show that, for polynomial input-affine nonlinear systems, such an ISS Lyapunov function can be computed from data.

\section{Data-based problem formulation}
\label{sec:probl_form}

As a class of the nonlinear systems in~\eqref{plant_generic}, we consider input-affine nonlinear systems with polynomial dynamics
\begin{equation}\label{sys1}
\dot{x}=f_{\star}(x)+g_{\star}(x)u
\end{equation}
where $f_{\star}$ and $g_{\star}$ have polynomials as their elements.
We consider this relevant class of~\eqref{plant_generic} because, in general, polynomial vector fields can approximate smooth vector fields tightly (on compact sets) and, here, they enable combining data-based conditions with powerful design tools provided by SOS.
The actual expressions of $f_\star$ and $g_\star$ are unknown to us; nonetheless, we make the next assumption on them.
\begin{assumption}
\label{Assum1}
We know $Z(x) \in \real^N$ and $W(x) \in \real^{M \times m}$ that have monomials of $x$ as their elements and are such that there exist constant coefficient matrices $A_\star \in \real^{n \times N}$ and $B_\star \in \real^{n \times M}$ satisfying, for each $x \in \real^n$,
\begin{align*}
f_\star(x) = A_\star Z(x), \quad g_\star(x) = B_\star W(x).
\end{align*}
Moreover, $Z(0)=0$.
\end{assumption}
\medskip
Thanks to Assumption~\ref{Assum1}, \eqref{sys1} can be rewritten as
\begin{equation}\label{sys2}
\dot{x}=A_{\star}Z(x)+B_{\star}W(x)u
\end{equation}
for some \emph{known} $Z$ and $W$, which amount to lists of monomials, and \emph{unknown} coefficient matrices $A_{\star}$ and $B_{\star}$.
We make the next remark on Assumption~\ref{Assum1} and the selection of regressors $Z$ and $W$.
\begin{remark}
\label{remark:choice_Z_W}
In principle, a foolproof way of satisfying Assumption~\ref{Assum1} is to let $Z$ and $W$ contain all monomials with degree less than a large positive integer since Assumption~\ref{Assum1} only requires that if a monomial is present in $f_\star$ (or $g_\star$) it must be present in $Z$ (or $W$), but not vice versa.
This choice, however, can be overly conservative because the redundant terms can lead to an increased computational cost and possible infeasibility of the resulting SOS program (in our case, the program \eqref{dataDrivenProgram} below).
Still, even from high-level knowledge on the system under study \cite{ahmadi2021learning}, it may be apparent which monomials would appear in $f_\star$ (or $g_\star$), and thus in $Z$ (or $W$); otherwise, techniques such as those in \cite{brunton2016discovering} can be preliminarily employed.
In this regard, a parsimonious choice of the monomials in $Z$ and $W$ is best suited.
\end{remark}

To compensate for the lack of knowledge of $A_{\star}$ and $B_{\star}$, our approach is to gather information about the system from data and, based only on such data, design a state-feedback controller enforcing ISS with respect to the measurement error.
Data are collected in an open-loop experiment.
We consider $T$ data that are generated by~\eqref{sys2} in the inevitable presence of a noise term $d$, namely, for $i=0,\dots,T-1$,
\begin{align}\label{sys5}
\dot{x}(\td[i])=A_{\star}Z(x(\td[i]))+B_{\star}W(x(\td[i]))u(\td[i])+d(\td[i])
\end{align}
where we measure input, state and state derivative at time $\td[i]$ (the superscript $^{\textop{d}}$ denotes ``data''), but we know only a norm bound on the noise samples, as in the next assumption.
\begin{assumption}\label{Assum2}
For $\delta>0$, $d(\td[0])$, \dots, $d(\td[T-1])$ belong to
\begin{equation}
\label{setD}
\mathcal{D}:=\left\{d\in\mathbb{R}^{n}\colon | d|^2\leq \delta\right\}.
\end{equation}
\end{assumption}
\medskip

For simplicity, we assume to measure the state derivative $\dot{x}$ at times $\td[0]$, \dots, $\td[T-1]$.
When not available, $\dot{x}$ can be recovered from a denser sampling of $x$, e.g., using techniques from continuous-time system identification \cite{garnier2003continuous}: at any rate, these techniques allow reconstructing $\dot{x}$ with some error which we account for through noise $d$.
See \cite[Appendix A]{de2022event} for an alternative approach that does not require measuring $\dot{x}$ (but leads anyhow to the sets $\mathcal{I}_i$ and $\mathcal{I}$ defined below in~\eqref{def11} and \eqref{set_I}).
Moreover, we emphasize that sampling is \emph{not} required to be uniform (i.e., we do not need $\td[T-1]- \td[T-2] = \dots = \td[2]- \td[1] = \td[1]- \td[0]$) and, as a matter of fact, data points $\{ u(\td[i])$, $x(\td[i])$, $\dot{x}(\td[i]) \}_{i=0}^{T-1}$ need \emph{not} be collected from a single trajectory but can arise from multiple trajectories.

Based on the collected data, we can characterize  the set of matrices $[A\ B]$ consistent with data point $\{ u(\td[i])$, $x(\td[i])$, $\dot{x}(\td[i]) \}$ and the instantaneous bound $\mathcal{D}$ in~\eqref{setD} as
\begin{equation}\label{def11}
\mathcal{I}_i:=\big\{[A~B] \colon\dot{x}(\td[i])=
[A~B]\smat{Z(x(\td[i]))\\W(x(\td[i]))u(\td[i])} +d,|d|^2\leq \delta\big\},
\end{equation}
namely, the set of all matrices $[A\ B]$ that could have generated data point $\{ u(\td[i])$, $x(\td[i])$, $\dot{x}(\td[i]) \}$ for some $d$ complying with bound $\mathcal{D}$, cf.~\eqref{sys5}.
The set of matrices consistent with all data points and the instantaneous bound $\mathcal{D}$ is then
\begin{align}\label{set_I}
\mathcal{I}:=\bigcap_{i=0}^{T-1}\mathcal{I}_i. 
\end{align}
The presence of the noise $d$ during the experiment prevents from identifying the system unambigously, as is the case in set-membership system identification \cite{milanese2004set}, and requires us to design a feedback controller enforcing ISS with respect to the measurement error for all matrices $[A\ B] \in \mathcal{I}$, all of which could have generated the data.
Based on Section~\ref{sec:prel:event_trig}, this amounts to the design of a feedback controller $k$ and, as per Definition~\ref{def:ISS_Lyapunov_function}, an ISS Lyapunov function $\Viss$ and class $\mathcal{K}_\infty$ functions $\alpha_1$, $\alpha_2$, $\alpha_3$, $\alpha_4$ so that the conditions in~\eqref{propr_ISS_Lyap_fun} hold true, based on our priors and the available data. 
This results in the next problem statement.

\begin{problem}
\label{probl_form}
For data points $\{ u(\td[i])$, $x(\td[i])$, $\dot{x}(\td[i]) \}_{i=0}^{T-1}$ and the resulting set in $\mathcal{I}$ in~\eqref{set_I}, design a controller $u=k(x)$, an ISS Lyapunov function $\Viss$ and class $\mathcal{K}_\infty$ functions $\alpha_1$, $\alpha_2$, $\alpha_3$, $\alpha_4$ such that for all $[A \ B] \in \mathcal{I}$ and all $(x, e)$
\begin{subequations}
\begin{align}
& \hspace*{-5pt}\alpha_1(| x|)\leq \Viss(x)\leq\alpha_2(| x|) \label{sandwichVISS_pol} \\
& \hspace*{-5pt}\langle\nabla{\Viss}(x), \bmat{A & B} \! \smat{Z(x)\\ W(x) k(x+e)}\rangle \le \! -\alpha_3(| x|)+\alpha_4(| e|).\!\! \label{decreaseVISS_pol}
\end{align}
\end{subequations}
\end{problem}
\smallskip
We emphasize that this problem involves finding a solution $V$ to a dissipativity-like inequality as that in~\eqref{decreaseVISS_pol}, from data.

Before solving Problem~\ref{probl_form} in Section~\ref{sec:main}, we conclude this section by introducing an ellipsoidal overapproximation of the set $\mathcal{I}$, which relies on~\cite{bisoffi2021trade}, \cite{luppi2021data} and extends the approach in \cite[\S 3.7.2]{boyd1994linear}.
For data point $i=0,\dots,T-1$, define
\begin{align}
&\pmb{C}_i:=\dot{x}(\td[i])\dot{x}(\td[i])^\top-\delta I,\,\pmb{B}_i:=-\begin{bmatrix}Z(x(\td[i]))\\W(x(\td[i]))u(\td[i])\end{bmatrix}\dot{x}(\td[i])^\top\!, \notag \\
&\pmb{A}_i:=\begin{bmatrix}Z(x(\td[i]))\\W(x(\td[i]))u(\td[i])\end{bmatrix}\begin{bmatrix}Z(x(\td[i]))\\W(x(\td[i]))u(\td[i])\end{bmatrix}^\top.\label{datapoint}
\end{align}
Consider the set $\bar{\mathcal{I}}$
\begin{equation*}
\begin{aligned}
\bar{\mathcal{I}} \! := \! \{[A~B]=\zeta^\top \! \colon\bar{\mathbf{B}}^\top\! \barAbf^{-1}\barBbf+\barBbf^\top\!\zeta+\zeta^\top\!\bar{\mathbf{B}}+\zeta^\top\!\barAbf \zeta \preceq I\}
\end{aligned}
\end{equation*}
where the matrices $\barAbf$ and $\barBbf$ are designed by solving
\begin{equation}\label{overapp}
\begin{aligned}
& \text{min.} & & \hspace*{-6.2pt}-\log\det \barAbf\quad(\text{over}\,\barAbf,\barBbf,\tau_0,\dots,\tau_{T-1})\\
& \text{s.t.} & & \hspace*{-6.2pt} \barAbf\succ0,\tau_i\geq0,\,\text{for}\, i=0,1,\dots,T-1\\
& & & \hspace*{-6pt}\begin{bmatrix}-I-\sum_{i=0}^{T-1}\tau_i \pmb{C}_i&\star&\star\\ \barBbf-\sum^{T-1}_{i=0}\tau_i\pmb{B}_i&\barAbf-\sum_{i=0}^{T-1}\tau_i\pmb{A}_i&\star\\\barBbf&0&-\barAbf\end{bmatrix}\preceq 0.
\end{aligned}
\end{equation}
The objective function of~\eqref{overapp} corresponds to the size of the set $\bar{\mathcal{I}}$ \cite[\S 2.2]{bisoffi2021trade} and its constraints ensure the next result.
\begin{fact}[{\cite[\S 5.1]{bisoffi2021trade}}]
\label{fact:cal_I_subset_bar_cal_I}
If $\barAbf$ and $\barBbf$ are a solution to~\eqref{overapp}, the set $\bar{\mathcal{I}}$ satisfies $\mathcal{I} \subseteq \bar{\mathcal{I}}$.
\end{fact}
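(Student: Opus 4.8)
\medskip
\noindent\emph{Proof plan.}
The statement coincides with \cite[\S 5.1]{bisoffi2021trade}, and the plan is to reproduce its derivation: an exact description of $\mathcal{I}$ by matrix inequalities, followed by an S-procedure-type relaxation along the lines of \cite[\S 3.7.2]{boyd1994linear}. First I would rewrite each constraint $[A~B]\in\mathcal{I}_i$ as a matrix inequality quadratic in $\zeta:=[A~B]^\top$. Introducing the regressor stack $\phi_i:=\smat{Z(x(\td[i]))\\W(x(\td[i]))u(\td[i])}$ and using that $|d|^2\le\delta$ is equivalent to the rank-one matrix inequality $dd^\top\preceq\delta I$ (the largest eigenvalue of $dd^\top$ being $|d|^2$), membership $[A~B]\in\mathcal{I}_i$ in~\eqref{def11} becomes $(\dot x(\td[i])-\zeta^\top\phi_i)(\dot x(\td[i])-\zeta^\top\phi_i)^\top\preceq\delta I$; expanding the square and matching terms against~\eqref{datapoint} puts this in the form $\smat{I&\zeta^\top}\smat{\pmb{C}_i&\star\\\pmb{B}_i&\pmb{A}_i}\smat{I\\\zeta}\preceq0$, while the definition of $\bar{\mathcal{I}}$ reads $\smat{I&\zeta^\top}\smat{\barBbf^\top\barAbf^{-1}\barBbf-I&\star\\\barBbf&\barAbf}\smat{I\\\zeta}\preceq0$.

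Since ``$[A~B]\in\mathcal{I}$'' is the conjunction of the $T$ inequalities of the first type and ``$[A~B]\in\bar{\mathcal{I}}$'' has the same structure, the inclusion $\mathcal{I}\subseteq\bar{\mathcal{I}}$ would then follow from a matrix S-procedure, of which I need only the (elementary) sufficiency direction: if there are scalars $\tau_0,\dots,\tau_{T-1}\ge0$ with
\[
\smat{\barBbf^\top\barAbf^{-1}\barBbf-I&\star\\\barBbf&\barAbf} \preceq \textstyle\sum_{i=0}^{T-1}\tau_i\smat{\pmb{C}_i&\star\\\pmb{B}_i&\pmb{A}_i}
\]
then pre- and post-multiplying by $\smat{I&\zeta^\top}$ and $\smat{I\\\zeta}$ (with $\zeta=[A~B]^\top$ and $[A~B]\in\mathcal{I}$) turns the right-hand side into $\sum_i\tau_i\smat{I&\zeta^\top}\smat{\pmb{C}_i&\star\\\pmb{B}_i&\pmb{A}_i}\smat{I\\\zeta}\preceq0$, whence $[A~B]\in\bar{\mathcal{I}}$.

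The last step is to check that a solution of~\eqref{overapp} supplies such multipliers. Rearranging the displayed inequality gives $\smat{-I-\sum_i\tau_i\pmb{C}_i+\barBbf^\top\barAbf^{-1}\barBbf&\star\\\barBbf-\sum_i\tau_i\pmb{B}_i&\barAbf-\sum_i\tau_i\pmb{A}_i}\preceq0$, which I would recognize as the Schur complement of the $3\times3$ LMI in~\eqref{overapp} with respect to its $(3,3)$ block $-\barAbf\prec0$ (whose associated off-diagonal blocks in~\eqref{overapp} are $\barBbf^\top$ and $0$); since $\barAbf\succ0$, the Schur complement lemma makes the two equivalent. Hence any feasible $(\barAbf,\barBbf,\tau_0,\dots,\tau_{T-1})$ of~\eqref{overapp} certifies the S-procedure condition, and $\mathcal{I}\subseteq\bar{\mathcal{I}}$ follows; the constraints $\barAbf\succ0$ and $\tau_i\ge0$ are exactly those used.

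I do not anticipate a genuine obstacle, since the S-procedure is invoked only in its trivial direction. The two points that need care are (i) the passage from the norm bound $|d|^2\le\delta$ to the matrix inequality $dd^\top\preceq\delta I$, together with the transpose/sign bookkeeping that makes the expanded square match $\pmb{A}_i,\pmb{B}_i,\pmb{C}_i$ in~\eqref{datapoint}; and (ii) the Schur-complement manipulation, whose purpose is to keep the inverse $\barAbf^{-1}$ out of the SDP~\eqref{overapp} while letting the third block row and column reproduce exactly the term $\barBbf^\top\barAbf^{-1}\barBbf$ that enters the definition of $\bar{\mathcal{I}}$.
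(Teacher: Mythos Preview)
Your proof plan is correct: the rewriting of each $\mathcal{I}_i$ via $dd^\top\preceq\delta I$, the matrix S-procedure in its sufficiency direction, and the Schur-complement reduction of the $3\times3$ LMI in~\eqref{overapp} with respect to the $(3,3)$ block $-\barAbf\prec0$ to the required $2\times2$ inequality are all accurate and constitute exactly the standard derivation from the cited reference. The paper itself does not supply a proof of this fact but simply quotes it from \cite[\S 5.1]{bisoffi2021trade}, so there is nothing further to compare.
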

The so-obtained $\bar{\mathcal{I}}$ is thus an overapproximation of $\mathcal{I}$ and we use it since, unlike the set $\mathcal{I}$, it is a matrix ellipsoid and this enables the developments of the sequel.
We would like to reassure the reader on the feasibility of the optimization program in~\eqref{overapp} by recalling the next fact, with definitions
\begin{align*}
W_0:=&
\bmat{W(x(\td[0]))u(\td[0]) & \!\!\!\dots\!\!\! & W(x(\td[T-1]))u(\td[T-1])}\in\mathbb{R}^{M\times T}\\
Z_0:=&
\bmat{
Z(x(\td[0]))\hspace*{8.7mm} & \!\!\!\dots\!\!\! & Z(x(\td[T-1]))
}\in\mathbb{R}^{N\times T}.
\end{align*}
\begin{fact}[{\cite[Lemma 2]{luppi2021data}}]
\label{fact:suff_cond_feas}
If the matrix $\smat{Z_0\\W_0}$ has full row rank, then the optimization program in \eqref{overapp} is feasible.
\end{fact}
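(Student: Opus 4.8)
The plan is to exhibit one explicit feasible point of~\eqref{overapp}; since feasibility just means the constraint set is nonempty, this suffices. First I would eliminate the third block: whenever $\barAbf\succ0$ the $(3,3)$ block $-\barAbf$ of the matrix in~\eqref{overapp} is negative definite, so a Schur complement with respect to it, followed by a swap of the two remaining block-rows and -columns, shows that, for $\barAbf\succ0$, the LMI in~\eqref{overapp} is equivalent to
\begin{equation*}
\begin{bmatrix}\barAbf & \barBbf\\ \barBbf^\top & \barBbf^\top\barAbf^{-1}\barBbf-I\end{bmatrix}\preceq\sum_{i=0}^{T-1}\tau_i\begin{bmatrix}\pmb{A}_i & \pmb{B}_i\\ \pmb{B}_i^\top & \pmb{C}_i\end{bmatrix},
\end{equation*}
i.e., to the matrix S-procedure condition that certifies $\mathcal{I}\subseteq\bar{\mathcal{I}}$ (cf.\ Fact~\ref{fact:cal_I_subset_bar_cal_I}). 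Thus it is enough to produce $\barAbf\succ0$, $\barBbf$ and scalars $\tau_0,\dots,\tau_{T-1}\ge0$ meeting this inequality.

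The decisive remark is that, from~\eqref{datapoint} and the definitions of $Z_0$ and $W_0$, one has $\sum_{i=0}^{T-1}\pmb{A}_i=\smat{Z_0\\W_0}\smat{Z_0\\W_0}^\top$, which is positive definite \emph{precisely because} $\smat{Z_0\\W_0}$ has full row rank; this coercivity of the aggregated quadratic form is exactly what leaves room for an enclosing matrix ellipsoid. Accordingly I would pick equal multipliers $\tau_i:=1/(T\delta)$ for all $i$ (recall $\delta>0$). Writing $\dot{X}_0:=\bmat{\dot{x}(\td[0]) & \cdots & \dot{x}(\td[T-1])}$ and using $\sum_i\pmb{B}_i=-\smat{Z_0\\W_0}\dot{X}_0^\top$ and $\sum_i\pmb{C}_i=\dot{X}_0\dot{X}_0^\top-T\delta I$, the right-hand side above becomes $\tfrac{1}{T\delta}\bmat{\sum_i\pmb{A}_i & \sum_i\pmb{B}_i\\ \sum_i\pmb{B}_i^\top & \sum_i\pmb{C}_i}$, and I would simply match its ``square part'' by setting $\barAbf:=\tfrac{1}{T\delta}\sum_i\pmb{A}_i$ (which is $\succ0$ by the rank hypothesis) and $\barBbf:=\tfrac{1}{T\delta}\sum_i\pmb{B}_i$.

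With these choices both sides of the inequality agree in every block except the bottom-right one, and since $\barBbf^\top\barAbf^{-1}\barBbf=\tfrac{1}{T\delta}(\sum_i\pmb{B}_i)^\top(\sum_i\pmb{A}_i)^{-1}(\sum_i\pmb{B}_i)$, the inequality reduces to the single matrix inequality $(\sum_i\pmb{C}_i)-(\sum_i\pmb{B}_i)^\top(\sum_i\pmb{A}_i)^{-1}(\sum_i\pmb{B}_i)\succeq-T\delta I$. Its left-hand side equals $\dot{X}_0(I-\Pi)\dot{X}_0^\top-T\delta I$, with $\Pi:=\smat{Z_0\\W_0}^\top(\smat{Z_0\\W_0}\smat{Z_0\\W_0}^\top)^{-1}\smat{Z_0\\W_0}$ the orthogonal projector onto the row space of $\smat{Z_0\\W_0}$, so the inequality holds because $I-\Pi\succeq0$. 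Hence the triple $(\barAbf,\barBbf,\tau_0,\dots,\tau_{T-1})$ is feasible for~\eqref{overapp}, which proves the claim. The only genuine content here is the identification of ``$\smat{Z_0\\W_0}$ has full row rank'' with ``$\sum_i\pmb{A}_i\succ0$'', together with the equal-multiplier choice that collapses the whole problem to completing the square once; the remaining steps are Schur-complement bookkeeping, and this argument is essentially that of~\cite[Lemma~2]{luppi2021data}.
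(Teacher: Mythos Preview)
The paper does not actually prove Fact~\ref{fact:suff_cond_feas}; it merely cites it from~\cite[Lemma~2]{luppi2021data} and follows it with an informal remark about why more data helps. Your argument is correct and is, as you say, essentially the proof from the cited reference: the Schur complement with respect to the $(3,3)$ block $-\barAbf$ reduces the LMI to the two-block S-procedure inequality, the full-row-rank hypothesis is exactly what makes $\sum_i\pmb{A}_i\succ0$, and the equal-multiplier choice $\tau_i=1/(T\delta)$ together with $\barAbf=\tfrac{1}{T\delta}\sum_i\pmb{A}_i$, $\barBbf=\tfrac{1}{T\delta}\sum_i\pmb{B}_i$ collapses everything to the nonnegativity of $\dot{X}_0(I-\Pi)\dot{X}_0^\top$, which holds because $I-\Pi$ is an orthogonal projector. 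Nothing is missing.
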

Intuitively, collecting more data points can ensure that $\smat{Z_0\\W_0}$ becomes full row rank if it is not, since these additional data points constitute additional columns of $\smat{Z_0\\W_0}$; hence, Fact~\ref{fact:suff_cond_feas} suggests that the more data points, the higher the chance that $\smat{Z_0\\W_0}$ has full row rank and, in turn, that \eqref{overapp} is feasible.

If $\barAbf$ and $\barBbf$ are a solution to~\eqref{overapp}, we have $\barAbf\succ 0$ by construction. Then, we can define 
\begin{align}\label{zetaBoldQbold}
\barZbf :=-\barAbf^{-1}\barBbf \text{ and }\barQbf :=I,
\end{align}
and rewrite the set $\mathcal{\bar{I}}$ as
\begin{equation}\label{data1}
\mathcal{\bar{I}}=\{[A~B]=\zeta^\top=(\barZbf+\barAbf^{-\frac{1}{2}}\Upsilon \barQbf^{\frac{1}{2}})^\top\colon \|\Upsilon\|\leq 1 \}.
\end{equation}

\section{Main result: learning an ISS Lyapunov Function from data}
\label{sec:main}

In this section, we give our main result.
With $\barAbf$, $\barZbf$ and $\barQbf$ readily obtained from noisy data via~\eqref{overapp}, we provide conditions to find a controller, an ISS Lyapunov function, and comparison functions for all possible dynamics consistent with data.
The next theorem contains these conditions.

\begin{theorem}[Data-driven noisy ISS]\label{thm:ISSlf}
For data points $\{ u(\td[i])$, $x(\td[i])$, $\dot{x}(\td[i]) \}_{i=0}^{T-1}$ and under Assumptions~\ref{Assum1}-\ref{Assum2}, let the optimization program in \eqref{overapp} be feasible.
Suppose there exists a scalar $\varepsilon > 0$, a vector polynomial $k$ with $k(0)=0$, polynomials $\alpha_1$, \dots, $\alpha_4$, polynomials $\Viss$ and $\lambda$, SOS polynomials $s_{1},s_{2}$, $s_{3}$, an SOS matrix polynomial $s_{4}$ such that \eqref{dataDrivenProgram}, displayed over two columns, holds for all $r$, $x$, $e$.
Then,
\begin{align}
\label{closed_loop_data_driven}
\dot{x} = \bmat{A & B} \bmat{ Z(x) \\ W(x) k(x+e) }
\end{align}
is ISS with respect to the measurement error $e$ for all $[A\ B] \in \bar{\mathcal{I}}$, and in particular for $[A_\star \ B_\star]$.
\begin{figure*}
\begin{subequations}\label{dataDrivenProgram}
\begin{align}
& \alpha_i(r) = \sum_{k=1}^{N_i}c_{ik}r^{2k}, \, c_{ik} \ge 0, \, \sum_{k=1}^{N_i}c_{ik} \ge \varepsilon \text{ for } i =1, \dots, 4, \; k=1, \dots, N_i \label{classKinfFunctions} \\
& \Viss(x)-\alpha_1(| x|) = s_{1}(x),\quad \alpha_2(| x|)-\Viss(x) =s_{2}(x), \quad \lambda(x,e)-\varepsilon =s_{3}(x,e),\label{sandwichBound_multiplier}\\
&
\bmat{
\alpha_3(| x|)-\alpha_4(| e|)+\frac{\partial{\Viss}}{\partial{x}}(x)\barZbf^\top\smat{Z(x)\\W(x)k(x+e)}&\star&\star\\\barQbf^\frac{1}{2}\frac{\partial{\Viss}}{\partial{x}}(x)^\top&-2\lambda(x,e) I&\star\\ \lambda(x,e) \barAbf^{-\frac{1}{2}}\smat{Z(x)\\W(x)k(x+e)} &0&-2\lambda(x,e) I
} =-s_{4}(x,e),\label{decreaseISSlyapunovFunction}
\end{align}
\end{subequations}
\hrule
\end{figure*}
\end{theorem}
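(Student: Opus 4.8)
The plan is to verify that $\Viss$, the controller $k$ and the comparison functions $\alpha_1,\dots,\alpha_4$ supplied by the hypothesis form a valid ISS Lyapunov pair for the closed loop~\eqref{closed_loop_data_driven} in the sense of Definition~\ref{def:ISS_Lyapunov_function}, \emph{simultaneously} for every $[A\ B]\in\bar{\mathcal I}$, and then to invoke the characterization that a smooth ISS Lyapunov function implies ISS~\cite{sontag1995characterizations}. Several pieces are immediate: by~\eqref{classKinfFunctions} and Lemma~\ref{lemma:Kinf} each $\alpha_i$ is class $\mathcal K_\infty$, and since each $\alpha_i$ uses only even powers of its argument, $\alpha_1(|x|),\alpha_3(|x|),\alpha_4(|e|)$ are genuine polynomials in $x,e$; from~\eqref{sandwichBound_multiplier}, $s_1,s_2$ being SOS gives $\alpha_1(|x|)\le\Viss(x)\le\alpha_2(|x|)$ for all $x$, i.e.~\eqref{sandwichVISS_pol}, whence also $\Viss(0)=0$, $\Viss$ positive definite and radially unbounded, and smooth since it is a polynomial; and $s_3$ being SOS gives $\lambda(x,e)\ge\varepsilon>0$ for all $(x,e)$, a fact used below.

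The heart of the proof is to extract the dissipation inequality~\eqref{decreaseVISS_pol}, for all $[A\ B]\in\bar{\mathcal I}$, from the single SOS condition~\eqref{decreaseISSlyapunovFunction}. Fix $[A\ B]\in\bar{\mathcal I}$ and, using~\eqref{zetaBoldQbold}--\eqref{data1} (recall $\barQbf=I$, and $\barAbf^{-1/2}$, the inverse of the symmetric positive definite square root of $\barAbf\succ0$, is symmetric), write $[A\ B]=\barZbf^\top+\Upsilon^\top\barAbf^{-1/2}$ with $\|\Upsilon\|\le1$. Abbreviate $\phi:=\smat{Z(x)\\W(x)k(x+e)}$ and $\lambda:=\lambda(x,e)>0$. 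As $s_4$ is an SOS matrix polynomial,~\eqref{decreaseISSlyapunovFunction} makes its left-hand side negative semidefinite for all $(x,e)$; its lower-right $2\times2$ block is block-diagonal with blocks $-2\lambda I\prec0$, so a Schur complement with respect to that block shows this is equivalent to
\begin{align*}
\alpha_3(|x|)-\alpha_4(|e|)+\tfrac{\partial\Viss}{\partial x}(x)\barZbf^\top\phi+\tfrac{1}{2\lambda}\bigl|\tfrac{\partial\Viss}{\partial x}(x)\bigr|^2+\tfrac{\lambda}{2}\phi^\top\barAbf^{-1}\phi\le0 .
\end{align*}
Separately, Cauchy--Schwarz, $\|\Upsilon\|\le1$ and Young's inequality (the completion of squares underlying Lemma~\ref{lemma:uncert_remov}) give
\begin{align*}
\tfrac{\partial\Viss}{\partial x}(x)\,\Upsilon^\top\barAbf^{-1/2}\phi\le\bigl|\tfrac{\partial\Viss}{\partial x}(x)\bigr|\,\bigl|\barAbf^{-1/2}\phi\bigr|\le\tfrac{1}{2\lambda}\bigl|\tfrac{\partial\Viss}{\partial x}(x)\bigr|^2+\tfrac{\lambda}{2}\phi^\top\barAbf^{-1}\phi .
\end{align*}
Summing the two displays and using $[A\ B]=\barZbf^\top+\Upsilon^\top\barAbf^{-1/2}$ gives $\langle\nabla\Viss(x),\bmat{A & B}\phi\rangle\le-\alpha_3(|x|)+\alpha_4(|e|)$, which is~\eqref{decreaseVISS_pol}; since $[A\ B]\in\bar{\mathcal I}$ was arbitrary, this holds over all of $\bar{\mathcal I}$.

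Consequently $\Viss$ is a smooth ISS Lyapunov function for~\eqref{closed_loop_data_driven} for every $[A\ B]\in\bar{\mathcal I}$ (the closed-loop vector field is polynomial, hence locally Lipschitz), so by~\cite{sontag1995characterizations} that system is ISS with respect to $e$ for every $[A\ B]\in\bar{\mathcal I}$. Finally $[A_\star\ B_\star]\in\bar{\mathcal I}$: by~\eqref{sys5} and Assumption~\ref{Assum2}, $[A_\star\ B_\star]\in\mathcal I_i$ for each $i$, hence $[A_\star\ B_\star]\in\mathcal I\subseteq\bar{\mathcal I}$ by Fact~\ref{fact:cal_I_subset_bar_cal_I} (applicable since~\eqref{overapp} is feasible), so the true closed loop is covered as well.

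I expect the crux to be the Schur-complement reduction in the middle step: one must check that the $2\lambda(x,e)$ weights hard-coded in~\eqref{decreaseISSlyapunovFunction} are precisely those produced by Young's inequality when bounding the uncertain term $\tfrac{\partial\Viss}{\partial x}\Upsilon^\top\barAbf^{-1/2}\phi$, and that the transpose/sign bookkeeping of the ellipsoidal parametrization~\eqref{data1} is consistent, so that the semi-infinite requirement ``\eqref{decreaseVISS_pol} for all $[A\ B]\in\bar{\mathcal I}$'' collapses exactly to the single SOS condition~\eqref{decreaseISSlyapunovFunction}; everything else is routine.
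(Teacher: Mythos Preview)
Your proof is correct and follows essentially the same route as the paper's: both reduce the SOS matrix condition~\eqref{decreaseISSlyapunovFunction} to the scalar inequality via a Schur complement with respect to the $-2\lambda I$ blocks, and both absorb the uncertain term $\tfrac{\partial\Viss}{\partial x}\Upsilon^\top\barAbf^{-1/2}\phi$ by completing the square (the paper packages this as Lemma~\ref{lemma:uncert_remov}, you invoke Cauchy--Schwarz plus Young directly). The only cosmetic difference is direction: the paper argues backwards from the target dissipation inequality~\eqref{f77} to the SOS condition, whereas you argue forwards from~\eqref{decreaseISSlyapunovFunction} to~\eqref{decreaseVISS_pol}; the algebra and the key ideas are identical.
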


Before proving this result, some comments on it are appropriate. 
Condition~\eqref{classKinfFunctions} ensures that $\alpha_1$, $\alpha_2$, $\alpha_3$, $\alpha_4$ are class $\mathcal{K}_\infty$ functions by Lemma~\ref{lemma:Kinf}.
The reason for considering even powers of $r$ in them is that such powers cancel the square roots in $\sqrt{\sum_{i=1}^n x_i^2}=|x|$ or $\sqrt{\sum_{i=1}^n e_i^2}=|e|$ so that $\alpha_1(|x|)$, $\alpha_2(|x|)$, $\alpha_3(|x|)$, $\alpha_4(|e|)$ in~\eqref{sandwichBound_multiplier} and \eqref{decreaseISSlyapunovFunction} are polynomials in the components of $x$ and $e$ and, thus, SOS tools can be fruitfully applied to solve \eqref{dataDrivenProgram}.
By allowing some coefficients of functions $\alpha_i$, $i=1,\dots,4$ to be zero, we let program \eqref{dataDrivenProgram} design the maximum degree of the $\alpha_i$'s, which is only required to be nongreater than a positive integer $N_i$ of our choice.
Condition~\eqref{sandwichBound_multiplier} entails lower and upper bounds on the ISS Lyapunov function $V$, as in~\eqref{sandwichVISS_pol}, and positivity of state- and error-dependent multiplier $\lambda$, as needed in Lemma~\ref{lemma:uncert_remov}.
Finally, condition~\eqref{decreaseISSlyapunovFunction} corresponds to guaranteeing the dissipativity-like inequality \eqref{decreaseVISS_pol} for all polynomial dynamics that are consistent with data, and uses the quantities $\barZbf$, $\barQbf$, $\barAbf$ that were obtained from $\{ u(\td[i])$, $x(\td[i])$, $\dot{x}(\td[i]) \}_{i=0}^{T-1}$ by solving~\eqref{overapp}.
Theorem~\ref{thm:ISSlf} effectively solves Problem~\ref{probl_form} since $\mathcal{I} \subseteq \bar{\mathcal{I}}$ by Fact~\ref{fact:cal_I_subset_bar_cal_I}.

\begin{proofof}{Theorem~\ref{thm:ISSlf}}
By \eqref{classKinfFunctions} and Lemma~\ref{lemma:Kinf}, $\alpha_1$, \dots, $\alpha_4$ are class $\mathcal{K}_\infty$ functions (when their domain is restricted to $\real_{\ge 0}$).
Since $s_1$, $s_2$ are SOS polynomials, \eqref{sandwichBound_multiplier} implies \eqref{sandwichVISS_pol}.
If, for $\Viss$, $\alpha_3$ and $\alpha_4$, we have that for all $(x,e)$ and all $\Upsilon$ with $\|\Upsilon\| \leq 1$
\begin{equation}\label{f77}
\begin{aligned}
&\alpha_3(| x|)-\alpha_4(| e|)+\tfrac{\partial{\Viss}}{\partial{x}}(x)\barZbf^\top\smat{Z(x)\\W(x)k(x+e)}\\
&+\tfrac{\partial{\Viss}}{\partial{x}}(x)\barQbf^{\frac{1}{2}}\Upsilon^\top\mathbf{\bar{A}}^{-\frac{1}{2}}\smat{Z(x)\\W(x)k(x+e)}\leq 0,
\end{aligned}
\end{equation}
then, by~\eqref{data1}, \eqref{decreaseVISS_pol} holds for all $(x,e)$ and all $[A\ B] \in \mathcal{\bar{I}} \supseteq \mathcal{I}$, where \eqref{decreaseVISS_pol} corresponds to~\eqref{def1-2}.
In other words, by Definition~\ref{def:ISS_Lyapunov_function}, $V$ is a (smooth) Lyapunov function for~\eqref{closed_loop_data_driven} for all $[A \ B] \in \bar{\mathcal{I}}$; since $[A_\star \ B_\star] \in \mathcal{I} \subseteq \bar{\mathcal{I}}$ by Assumption~\ref{Assum2}, this would prove the statement.
We show then that by~\eqref{sandwichBound_multiplier} and \eqref{decreaseISSlyapunovFunction}, \eqref{f77} holds for all $(x,e)$ and all $\Upsilon$ with $\|\Upsilon\| \leq 1$.
We rewrite \eqref{f77} as
\begin{equation}\label{f7}
\begin{aligned}
& \alpha_3(| x|)-\alpha_4(|e|)+\tfrac{\partial{\Viss}}{\partial{x}}(x)\barZbf^\top \smat{ Z(x)\\ W(x)k(x+e)} \\
&+\tfrac{1}{2}\tfrac{\partial{\Viss}}{\partial{x}}(x)\barQbf^{\frac{1}{2}}\Upsilon^\top\mathbf{\bar{A}}^{-\frac{1}{2}} \smat{Z(x)\\W(x)k(x+e)} \\
& +\tfrac{1}{2}\left[\begin{smallmatrix}Z(x)\\W(x)k(x+e)\end{smallmatrix}\right]^\top\mathbf{\bar{A}}^{-\frac{1}{2}}\Upsilon\mathbf{\bar{Q}}^{\frac{1}{2}}\tfrac{\partial \Viss}{\partial x}(x)^\top\leq 0.
\end{aligned}
\end{equation}
Since $\varepsilon > 0$ and $s_3$ is an SOS polynomial, \eqref{sandwichBound_multiplier} implies $\lambda(x,e)>0$ for all $(x,e)\in\mathbb{R}^{n+n}$. So, by Lemma~\ref{lemma:uncert_remov}, \eqref{f7} is valid for all $(x,e)$ and all $\Upsilon$ with $\|\Upsilon\| \leq 1$ if, for all $(x,e)$,
\begin{align}
&\alpha_3(| x|)-\alpha_4(| e|)+\tfrac{\partial{\Viss}}{\partial{x}}(x)\barZbf^\top \smat{Z(x)\\W(x)k(x+e)} \notag \\
& +\tfrac{\lambda(x,e)}{2} \smat{Z(x)\\W(x)k(x+e)}^\top\mathbf{\bar{A}}^{-\frac{1}{2}}\barAbf^{-\frac{1}{2}} \smat{Z(x)\\W(x)k(x+e)} \notag \\
& +\tfrac{1}{2\lambda(x,e)}\tfrac{\partial{\Viss}}{\partial{x}}(x)\barQbf^{\frac{1}{2}}\barQbf^{\frac{1}{2}}\tfrac{\partial{\Viss}}{\partial{x}}(x)^\top \leq 0. \label{f8}
\end{align}
Applying Schur complement to \eqref{f8} yields that, for all $(x,e)$,
\begingroup
\thinmuskip=.5mu plus 1mu
\medmuskip=1.mu plus 1mu
\thickmuskip=1.5mu plus 1mu
\setlength\arraycolsep{1.pt}%
\begin{equation*}
\bmat{
\left\{
\begin{matrix}
\alpha_3(|x|)-\alpha_4(|e|)\\
+\frac{\partial{\Viss}}{\partial{x}}(x)\barZbf^\top \smat{Z(x)\\W(x)k(x+e)}
\end{matrix}
\right\}
 &\star&\star\\
\barQbf^{\frac{1}{2}}\frac{\partial{\Viss}}{\partial{x}}(x)^\top&-2 \lambda(x,e) I&\star\\
\lambda(x,e)\barAbf^{-\frac{1}{2}} \smat{Z(x)\\W(x)k(x+e)} &0&-2\lambda(x,e)I
} \preceq 0,
\end{equation*}
\endgroup%
which is implied by~\eqref{decreaseISSlyapunovFunction}.
\end{proofof}

Admittedly, the program \eqref{dataDrivenProgram}, and specifically \eqref{decreaseISSlyapunovFunction}, is bilinear, and thus nonconvex, due to the products of decision variables $V$, $k$ and $\lambda$, $k$.
To address this nonconvexity, one can use solvers tailored for bilinear programs, such as PENBMI and BMIBND.
Here, we adopt the widespread alternate approach \cite{jarvis2005control,majumdar2013control} where, in a first step, we fix $k$ and solve \eqref{dataDrivenProgram}, which becomes convex with respect to the remaining decision variables (which include $V$, $\lambda$) and, in a second step, we fix $V$, $\lambda$ and solve \eqref{dataDrivenProgram}, which becomes convex with respect to the remaining decision variables (which include $k$).
We successfully use this approach in our numerical example in Section~\ref{sec:example}.

For a given $u_{\max}(\cdot)$, bounds in the form $|k(x)| \le u_{\max}(x)$, for all $x$, are readily incorporated by adding to~\eqref{dataDrivenProgram} the convex condition $\smat{-u_{\max}(x)^2 & k(x)^\top\\ k(x) & -I} = - s_5(x)$, for all $x$, for an SOS polynomial $s_5$. 

\begin{figure}[htbp]
\centerline{\hspace*{3.5pt}\includegraphics[scale=.6]{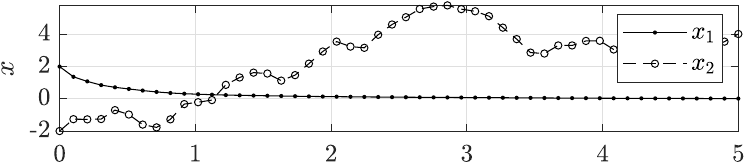}}
\centerline{\includegraphics[scale=.6]{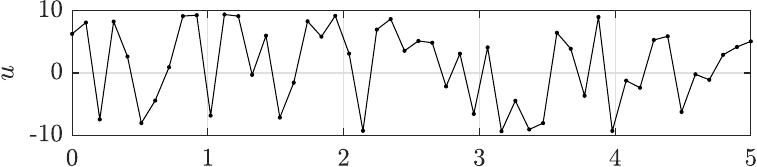}}
\centerline{\hspace*{2pt}\includegraphics[scale=.6]{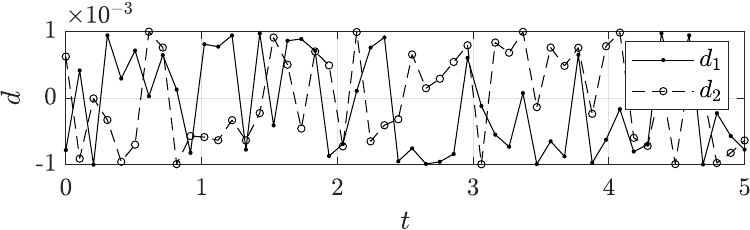}}
\caption{Data-collection experiment for the considered polynomial system.}\label{fig:data}
\end{figure}

\section{Numerical example}
\label{sec:example}

To exemplify our result for data-based design of a controller achieving ISS with respect to the measurement error, we consider the polynomial system
\begin{equation}\label{ex1}
\smat{
\dot{x}_1\\
\dot{x}_2}
=
\smat{-x_1+x_1^2x_2\\
0}
+
\smat{0\\1}u = f_{\star}(x)+g_{\star}(x)u
\end{equation}
from \cite[Example 14.11]{khalil2002nonlinear}.
Based on the considerations in Remark~\ref{remark:choice_Z_W}, we select $Z(x)=\bmat{x_1&x_1^2&x_1^2x_2&x_1x_2^2&x_2^3}^\top$ and $W(x)=1$, which satisfy Assumption~\ref{Assum1}. Note that $Z$ contains more monomials than those appearing in $f_\star$.
With these $Z$ and $W$, \eqref{ex1} yields
$A_\star =
\smat{
-1&0&1&0&0\\
0&0&0&0&0
}$ and $B_\star=
\smat{
0\\1
}$,
which are \emph{unknown} in our setting, see Section~\ref{sec:probl_form}, and used \emph{only} to generate the noisy data points on which \eqref{dataDrivenProgram} is based.
In the numerical experiment, $u$ and $d$ are taken as uniformly distributed random numbers with $|u| \le 10$ and $|d| \le 0.001$ and $x_0=(2,-2)$.
The evolutions of $x$, $u$ and the \emph{unknown} $d$ are in Fig.~\ref{fig:data}.
From these evolutions we collect data points $\{ u(\td[i])$, $x(\td[i])$, $\dot{x}(\td[i]) \}_{i=0}^{T-1}$ with $T=50$.

With these data points, we construct the matrices $\pmb{C}_i$, $\pmb{B}_i$, $\pmb{A}_i$ in \eqref{datapoint}, $i=0,\dots, T-1$, solve \eqref{overapp} by YALMIP \cite{lofberg2004yalmip} and obtain matrices $\barAbf$, $\barBbf$ and, thus, $\barZbf$ and $\barQbf$ in~\eqref{zetaBoldQbold}.

With $\barAbf$, $\barZbf$ and $\barQbf$, we solve \eqref{dataDrivenProgram} with YALMIP \cite{Lofberg2009}
imposing that the maximum degrees of $\lambda$, $k$, $V$ are respectively 4, 3, 2 at most.
As elucidated at the end of Section~\ref{sec:main}, we solve \eqref{dataDrivenProgram} alternately, first with $V$, $\lambda$ (but not $k$) among the decision variables and then with $k$ (but not $V$, $\lambda$) among the decision variables.
These two steps are repeated 3 times and require an initial guess for the controller $k$.
As an initial guess, we take $k(x)=-x_1^3-8x_2$ from \cite[Example 14.11]{khalil2002nonlinear}, which achieves global asymptotic stability when the error $e$ is zero; it is however well-known that global asymptotic stability with $e=0$ does not imply ISS with respect to $e$ in general \cite{sontag2008input}.
The solutions obtained from~\eqref{dataDrivenProgram} in this way are
\begin{align}
& k(x)= -0.0410 x_1 -1.8348 x_2 + 0.0086 x_1^2 \notag\\
& + 0.0279 x_1 x_2 + 0.0128 x_2^2 -1.3188 x_1^3 \notag\\
&-4.1114 x_1^2 x_2 -2.9410 x_1 x_2^2 -1.7240 x_2^3 \label{ctrl_sim}\\
&\Viss(x) =1.1045 x_1^2+1.3849 x_2^2 + 1.2357 x_1 x_2,\label{visssim} \\
&\alpha_1(r)=0.4808 r^2, \, \alpha_2(r) = 2.2602 r^2+ 1.3966 r^4,\notag\\
&\alpha_3(r)=0.6066 r^2+ 0.0378 r^4, \notag\\
& \alpha_4(r) = 2.8953 r^2+ 10.0643 r^4,  \, \lambda(x,e) =12.0254. \notag
\end{align}
For these polynomials, we have reported only their most significant terms and not the terms with comparatively much smaller coefficients.
Having found a certified feasible solution entails that \eqref{ex1} with feedback controller $u = k(x+e)$ for $k$ in~\eqref{ctrl_sim} (i.e., in the presence of measurement error $e$) is ISS with respect to $e$, thanks to the existence of ISS Lyapunov function $\Viss$ in~\eqref{visssim}.

To showcase this solution but without discussing the existence of a lower bound on the interevent times, we utilize the data-based designed controller $k$ and class $\mathcal{K}_\infty$ functions $\alpha_3$, $\alpha_4$ in an event-triggered control scheme, as outlined in Section~\ref{sec:prel:event_trig}.
Specifically, the control action is computed as in~\eqref{contrl} and the triggering condition is in~\eqref{event_trigg_cond}, for $\sigma=0.9$.
This results in the state evolution in Fig.~\ref{fig:et}, where convergence to zero is ensured by~\eqref{implied_strict_decrease}.
Fig.~\ref{fig:et} also shows $\alpha_{4}(| e(t)|) \le \sigma\alpha_3(| x(t)|)$ for all $t \ge 0$, thus confirming the strict decrease of the designed ISS Lyapunov function as in
\begin{align*}
\left\langle\nabla \Viss(x),
\bmat{A_\star & B_\star} \smat{ Z(x) \\ W(x) k(x+e) }\right\rangle
\leq-\alpha_3(| x|)+\alpha_4(| e|).
\end{align*}

\begin{figure}[htbp]
\centerline{\includegraphics[scale=0.62]{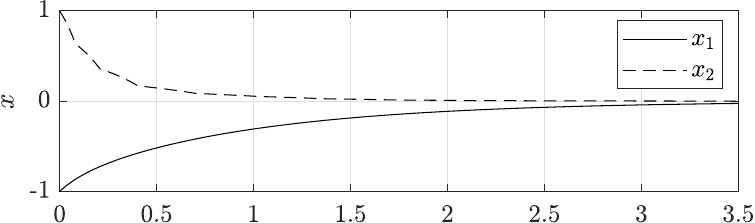}}\smallskip
\centerline{\hspace*{-2.6pt}\includegraphics[scale=0.62]{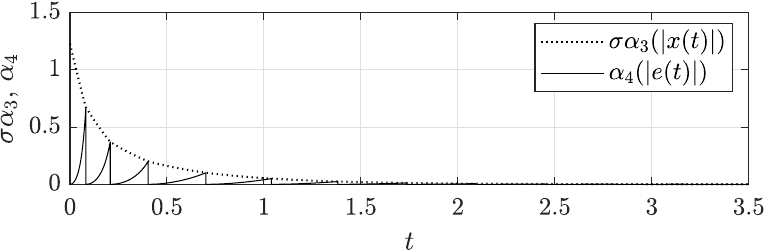}}
\caption{Evolution of the state (top) and comparison functions (bottom).}
\label{fig:et}
\end{figure}

\section{Conclusion}\label{Conclusion}

Noisy input\slash state data from an open-loop experiment on a polynomial input-affine system allow the designer to characterize all dynamics consistent with data.
We provided conditions to design, robustly for all these consistent dynamics, a feedback controller, an ISS Lyapunov function, and comparison functions that enforce ISS of the closed loop with respect to the measurement error.
The design is verified on a numerical simulation, which shows the potential of employing such controller and comparison functions for a data-based event triggered control scheme.

\bibliographystyle{IEEEtran}
\bibliography{pubs-eventtriggered}

\begin{thebibliography}{10}
\providecommand{\url}[1]{#1}
\csname url@samestyle\endcsname
\providecommand{\newblock}{\relax}
\providecommand{\bibinfo}[2]{#2}
\providecommand{\BIBentrySTDinterwordspacing}{\spaceskip=0pt\relax}
\providecommand{\BIBentryALTinterwordstretchfactor}{4}
\providecommand{\BIBentryALTinterwordspacing}{\spaceskip=\fontdimen2\font plus
\BIBentryALTinterwordstretchfactor\fontdimen3\font minus
  \fontdimen4\font\relax}
\providecommand{\BIBforeignlanguage}[2]{{%
\expandafter\ifx\csname l@#1\endcsname\relax
\typeout{** WARNING: IEEEtran.bst: No hyphenation pattern has been}%
\typeout{** loaded for the language `#1'. Using the pattern for}%
\typeout{** the default language instead.}%
\else
\language=\csname l@#1\endcsname
\fi
#2}}
\providecommand{\BIBdecl}{\relax}
\BIBdecl

\bibitem{sontag1989smooth}
E.~D. Sontag, ``Smooth stabilization implies coprime factorization,''
  \emph{IEEE Trans. Autom. Contr.}, vol.~34, no.~4, pp. 435--443, 1989.

\bibitem{sontag2008input}
------, \emph{Input to State Stability: Basic Concepts and Results}.\hskip 1em
  plus 0.5em minus 0.4em\relax Springer, 2008, pp. 163--220.

\bibitem{willems1976mechanisms}
J.~C. Willems, ``Mechanisms for the stability and instability in feedback
  systems,'' \emph{Proceedings of the IEEE}, vol.~64, no.~1, pp. 24--35, 1976.

\bibitem{ortega1997passivity}
R.~Ortega, Z.~P. Jiang, and D.~J. Hill, ``Passivity-based control of nonlinear
  systems: A tutorial,'' in \emph{Proc. Amer. Contr. Conf.}, vol.~5, 1997, pp.
  2633--2637.

\bibitem{van1992l2}
A.~J. van~der Schaft, ``{$L_2$}-gain analysis of nonlinear systems and
  nonlinear state feedback {$H_\infty$} control,'' \emph{IEEE Trans. Autom.
  Contr.}, vol.~37, no.~6, pp. 770--784, 1992.

\bibitem{sontag1995characterizations}
E.~D. Sontag and Y.~Wang, ``On characterizations of the input-to-state
  stability property,'' \emph{Sys. \& Contr. Lett.}, vol.~24, no.~5, pp.
  351--359, 1995.

\bibitem{tabuada2007event}
P.~Tabuada, ``Event-triggered real-time scheduling of stabilizing control
  tasks,'' \emph{IEEE Trans. Autom. Contr.}, vol.~52, no.~9, pp. 1680--1685,
  2007.

\bibitem{mazo2010iss}
M.~Mazo~Jr, A.~Anta, and P.~Tabuada, ``An {ISS} self-triggered implementation
  of linear controllers,'' \emph{Automatica}, vol.~46, no.~8, pp. 1310--1314,
  2010.

\bibitem{postoyan2014framework}
R.~Postoyan, P.~Tabuada, D.~Ne{\v{s}}i{\'c}, and A.~Anta, ``A framework for the
  event-triggered stabilization of nonlinear systems,'' \emph{IEEE Trans.
  Autom. Contr.}, vol.~60, no.~4, pp. 982--996, 2014.

\bibitem{scheres2022robustifying}
K.~J.~A. Scheres, R.~Postoyan, and W.~P. M.~H. Heemels, ``Robustifying
  event-triggered control to measurement noise,'' \emph{arXiv preprint
  arXiv:2209.00849}, 2022.

\bibitem{milanese2004set}
M.~Milanese and C.~Novara, ``Set membership identification of nonlinear
  systems,'' \emph{Automatica}, vol.~40, no.~6, pp. 957--975, 2004.

\bibitem{parrilo2003semidefinite}
P.~A. Parrilo, ``Semidefinite programming relaxations for semialgebraic
  problems,'' \emph{Mathematical programming}, vol.~96, pp. 293--320, 2003.

\bibitem{jarvis2005control}
Z.~Jarvis-Wloszek, R.~Feeley, W.~Tan, K.~Sun, and A.~Packard, ``Control
  applications of sum of squares programming,'' in \emph{Positive Polynomials
  in Control}.\hskip 1em plus 0.5em minus 0.4em\relax Springer, 2005, pp.
  3--22.

\bibitem{chesi2010lmi}
G.~Chesi, ``{LMI} techniques for optimization over polynomials in control: a
  survey,'' \emph{IEEE Trans. Autom. Contr.}, vol.~55, no.~11, pp. 2500--2510,
  2010.

\bibitem{majumdar2013control}
A.~Majumdar, A.~A. Ahmadi, and R.~Tedrake, ``Control design along trajectories
  with sums of squares programming,'' in \emph{Proc. IEEE Int. Conf. Rob.
  Autom.}, 2013, pp. 4054--4061.

\bibitem{koch2020verifying}
A.~Koch, J.~Berberich, and F.~Allg{\"o}wer, ``Verifying dissipativity
  properties from noise-corrupted input-state data,'' in \emph{Proc. IEEE Conf.
  Dec. Contr.}, 2020, pp. 616--621.

\bibitem{van2022data}
H.~J. van Waarde, M.~K. Camlibel, P.~Rapisarda, and H.~L. Trentelman,
  ``Data-driven dissipativity analysis: application of the matrix {S}-lemma,''
  \emph{IEEE Contr. Sys.}, vol.~42, no.~3, pp. 140--149, 2022.

\bibitem{romer2019one}
A.~Romer, J.~Berberich, J.~K{\"o}hler, and F.~Allg{\"o}wer, ``One-shot
  verification of dissipativity properties from input--output data,''
  \emph{IEEE Contr. Sys. Lett.}, vol.~3, no.~3, pp. 709--714, 2019.

\bibitem{rosa2023data}
T.~E. Rosa and B.~Jayawardhana, ``Data-driven dissipative verification of {LTI}
  systems: multiple shots of data, {QDF} supply-rate and application to a
  planar manipulator,'' in \emph{European Workshop on Advanced Control and
  Diagnosis}, 2022, pp. 249--259.

\bibitem{martin2021dissipativity}
T.~Martin and F.~Allg{\"o}wer, ``Dissipativity verification with guarantees for
  polynomial systems from noisy input-state data,'' in \emph{Proc. Amer. Contr.
  Conf.}, 2021, pp. 3963--3968.

\bibitem{martin2022determining}
------, ``Determining dissipativity for nonlinear systems from noisy data using
  {T}aylor polynomial approximation,'' in \emph{Proc. Amer. Contr. Conf.},
  2022, pp. 1432--1437.

\bibitem{de2022event}
C.~De~Persis, R.~Postoyan, and P.~Tesi, ``Event-triggered control from data,''
  \emph{arXiv preprint arXiv:2208.11634}, 2022.

\bibitem{digge2022data}
V.~Digge and R.~Pasumarthy, ``Data-driven event-triggered control for
  discrete-time {LTI} systems,'' in \emph{Proc. Eur. Contr. Conf.}, 2022, pp.
  1355--1360.

\bibitem{wang2023data}
X.~Wang, J.~Sun, G.~Wang, F.~Allg{\"o}wer, and J.~Chen, ``Data-driven control
  of distributed event-triggered network systems,'' \emph{IEEE/CAA Journal of
  Automatica Sinica}, vol.~10, no.~2, pp. 351--364, 2023.

\bibitem{guo2020learning}
M.~Guo, C.~De~Persis, and P.~Tesi, ``Learning control for polynomial systems
  using sum of squares relaxations,'' in \emph{Proc. IEEE Conf. Dec. Contr.},
  2020, pp. 2436--2441.

\bibitem{bisoffi2022data}
A.~Bisoffi, C.~De~Persis, and P.~Tesi, ``Data-driven control via {P}etersen's
  lemma,'' \emph{Automatica}, vol. 145, no. 110537, 2022.

\bibitem{luppi2021data}
A.~Luppi, A.~Bisoffi, C.~De~Persis, and P.~Tesi, ``Data-driven design of safe
  control for polynomial systems,'' \emph{arXiv preprint arXiv:2112.12664},
  2021.

\bibitem{ahmadi2021learning}
A.~A. Ahmadi and B.~{El Khadir}, ``Learning dynamical systems with side
  information,'' in \emph{Proc. 2nd Conf. Learn. Dynam. Contr.}, 2020.

\bibitem{brunton2016discovering}
S.~L. Brunton, J.~L. Proctor, and J.~N. Kutz, ``Discovering governing equations
  from data by sparse identification of nonlinear dynamical systems,''
  \emph{Proc. Nat. Academy of Sciences}, vol. 113, no.~15, pp. 3932--3937,
  2016.

\bibitem{garnier2003continuous}
H.~Garnier, M.~Mensler, and A.~Richard, ``Continuous-time model identification
  from sampled data: implementation issues and performance evaluation,''
  \emph{Int. Jour. Contr.}, vol.~76, no.~13, pp. 1337--1357, 2003.

\bibitem{bisoffi2021trade}
A.~Bisoffi, C.~De~Persis, and P.~Tesi, ``Trade-offs in learning controllers
  from noisy data,'' \emph{Sys. \& Contr. Lett.}, vol. 154, no. 104985, 2021.

\bibitem{boyd1994linear}
S.~Boyd, L.~El~Ghaoui, E.~Feron, and V.~Balakrishnan, \emph{Linear matrix
  inequalities in system and control theory}.\hskip 1em plus 0.5em minus
  0.4em\relax SIAM, 1994.

\bibitem{khalil2002nonlinear}
H.~K. Khalil, \emph{Nonlinear systems, 3rd ed.}\hskip 1em plus 0.5em minus
  0.4em\relax Prentice Hall, 2002.

\bibitem{lofberg2004yalmip}
J.~L\"{o}fberg, ``{YALMIP}: A toolbox for modeling and optimization in
  {MATLAB},'' in \emph{Proc. IEEE Int. Symp. Comp. Aid. Contr. Sys. Des.},
  2004.

\bibitem{Lofberg2009}
J.~L{\"{o}}fberg, ``Pre- and post-processing sum-of-squares programs in
  practice,'' \emph{IEEE Trans. Autom. Contr.}, vol.~54, no.~5, pp. 1007--1011,
  2009.

\end{thebibliography}
\end{document}